\newtheorem{cor}{Corollary}
\newtheorem{defn}{Definition}
\newtheorem{theorem}{Theorem}
\newtheorem*{thm3}{Theorem 3}
\newtheorem*{thm2}{Theorem 2}
\title{Reproductive Value in Graph-structured Populations}
\author{Wes Maciejewski\\
Department of Mathematics\\
The University of British Columbia\\
Vancouver, British Columbia, Canada}
\begin{document}

\linenumbers

\maketitle

\begin{abstract}
Evolutionary graph theory has grown to be an area of intense study. Despite the amount of interest in the field, it seems to have grown separate from other subfields of population genetics and evolution. In the current work I introduce the concept of Fisher's (1930) reproductive value into the study of evolution on graphs. Reproductive value is a measure of the expected genetic contribution of an individual to a distant future generation. In a heterogeneous graph-structured population, differences in the number of connections among individuals translates into differences in the expected number of offspring, even if all individuals have the same fecundity. These differences are accounted for by reproductive value. The introduction of reproductive value permits the calculation of the fixation probability of a mutant in a neutral evolutionary process in any graph-structured population for either the moran birth-death or death-birth process.
\end{abstract}

\section{Introduction}

\doublespacing

Population structure has, for some time, been recognized as an important factor in determining the outcome of an evolutionary process. Structure can act to arrange individuals and produce evolutionary outcomes not seen in well-mixed populations \cite{nowakmay92}.  Early models considered an infinite number of islands of individuals, each linked by global dispersal \cite{wright31}. Subsequent work, like the stepping-stone model of \cite{kimuraweiss64, weisskimura65}, considered the spatial arrangement of these islands. These models were refined to the finite population case by considering a finite number of breeding deems linked by dispersal patterns \cite{levins69, levins70}.  Drawing on these earlier models, evolutionary graph theory has emerged as a convenient framework for modelling population structure \cite{liebermanhauertnowak05}.

An evolutionary graph $G$ is a collection of vertices $V$ and edges $E$ between them. The vertices are occupied by haploid individuals and the edges indicate who interacts with whom and where offspring disperse. Throughout this article I denote vertices by $v_i$ and the individual residing on $v_i$ by $i$. It is possible that the vertices are linked by two sets of edges, one indicating interactions and the other, replacements \cite{ohtsukinowakpacheco07}, but these two sets are often assumed to coincide, as they do in this article.

Since their introduction in \cite{liebermanhauertnowak05}, evolutionary graphs have become a well-studied representation of structured populations. The exact features of graphs that promote, or work against, cooperation are, however, still elusive. For highly symmetric (vertex-transitive) graphs exact results for any additive game undergoing a weak-selection evolutionary process have been obtained \cite{ohtsukihauertliebermannowak06, taylordaywild07}. This is the largest class of graphs for which results are known, encompassing many other results \cite{ohtsukinowak06, grafen07}. Actual interaction graphs are often highly non-symmetric \cite{santossantospacheco08} and it is of great interest to study evolution in these environments. 

Very few results have been obtained for non-symmetric graphs. There has been some interest in the role of vertex degree. Some work \cite{santossantospacheco08} has focused on the distribution of the degrees of vertices. Certain distributions (`scale-free') have been shown to promote altruistic and cooperative behaviours more than others (eg. regular graphs). These approaches have uncovered global features of graphs and a description of the process at the level of the individual is desirable. One of the challenges faced in the study of heterogeneous populations is dealing with individuals of differing quality. \emph{Reproductive value} \cite{fisher30} is a way of accounting for such differences. 

Antal, Redner, and Sood \cite{antalrednersood06} are perhaps the first to consider heterogeneous graphs at the individual scale. They have found that it is advantageous for the fitter mutant to occupy high-degree nodes in a Moran death-birth model (their `biased voter model') and lower-degree vertices in the birth-death process (their `biased invasion process'). This has been confirmed by subsequent research \cite{broomrychtarstadler11}. In the current article I show that these results, when phrased in terms of reproductive value \cite{fisher30, grafen06}, are two sides of the same coin. 

The work of \cite{antalrednersood06} and \cite{broomrychtarstadler11} focuses on the case of \emph{constant selection}, where the resident population has fecundity $1$ and a mutant with fecundity $r>1$ arises. The probability of this mutant taking over the entire population is calculated and compared against the neutral case of $r=1$. If this mutant fixation probability is greater, the mutant is advantageous. An extension of the results of \cite{antalrednersood06} and \cite{broomrychtarstadler11} to the case of a public-goods game, as in \cite{santossantospacheco08}, is highly desirable. I attempt to make headway by presenting an example that illustrates that a mutant individual can have greater evolutionary success depending on where it first emerges.

The main thrust of this article is a complete description of the fixation probability of an allele in any graph-structured population undergoing neutral drift. For a structured population of size $N$ with the property that all sites are equivalent---for example, degree-regular graphs---then this fixation probability is $1/N$, irrespective on which vertex the allele is first found. This is not the case for degree-heterogeneous graphs. In general, the fixation probability depends on the degree of the vertex on which the allele initially appears. In the current article I calculate these fixation probabilities for both the birth-death and death-birth Moran processes on any graph. A general rule is derived: fixation probability is positively associated with relative reproductive value. An allele will have a higher fixation probability if it first emerges on a vertex with a higher reproductive value in both the birth-death and death-birth processes. 

\section{Reproductive Value}

Reproductive value has been defined in various ways by different authors. The core of the definitions is the notion of long-term genetic share of a population. R.A. Fisher \cite{fisher30} first introduced reproductive value as a means of accounting for the differences in the reproductive output of different ages of females. Since that time reproductive value has been applied to age \cite{charlesworth80}, sex \cite{taylor90}, and spatially-structured \cite{rogerswillekens78} populations and has been placed on a rigorous mathematical footing \cite{grafen06}. At an intuitive level, the relative reproductive value of an individual $i$ is the probability that $i$ is the ancestor of a randomly chosen individual in a distant future generation \cite{taylorfrank96}. 

To define reproductive value, I suppose that the individuals in the population under consideration are neutral with respect to selection. That is, the genotype of an individual does not affect their fitness. Births and deaths occur at random in the population. Throughout this article I work with two Moran processes, which will be made explicit, that ensure a fixed population size. In the birth-death process, a birth occurs randomly in the population and the new offspring displaces a neighbouring individual, who dies. In the death-birth process an individual is chosen to die and a neighbouring individual is chosen at random to place an offspring on the newly vacated site. These birth and death probabilities are captured by a transition matrix $M$. Specifically, I define the $i$, $j$ entry of $M$ to be the probability $p_{ij}$ that the current individual $i$ is the offspring of individual $j$ produced during a birth/death event. This entry will differ depending on whether births preceed deaths or vice versa, and examples throughout the article will illustrate this. An individual may be unaffected by the birth/death event in which case we say that such an individual is ``from itself". 

\begin{figure}
\centering
\includegraphics[width=0.35\textwidth]{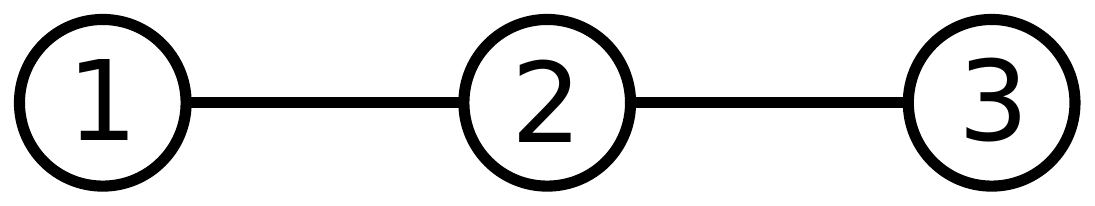}
\caption{The line graph on three vertices is the simplest example of a degree-heterogenous graph. Label the vertices $v_1$, $v_2$, and $v_3$ from left to right. As is shown in the penultimate section, spiteful behaviours can evolve in such a population structure and these depend on where the spiteful individual first emerges.}
\label{fig:3line}
\end{figure}

As a first example of such an $M$ matrix, consider a birth-death process on the $3$-line graph in Figure \ref{fig:3line}. In the neutral process all individuals have the same fecundity and are therefore chosen to reproduce with equal probability, which in the $3$-line case is $1/3$. If the centre, or hub, individual is chosen, then it places an offspring on either leaf vertex with probability $1/2$. If a leaf is chosen, its offspring disperses to the hub with probability $1$. Given the current state of the population, we can ask where the individual on a leaf vertex was before a birth-death event. With probability $1/6$, the individual is the offspring of the hub vertex and with probability $5/6$ the individual was unaffected by the birth-death event and was already resident on the leaf vertex. For the hub individual, with probability $1/3$ it came from one of leaf vertices and with probability $1/3$ it was unaffected by the birth-death event and already resident on the hub. In all, with the vertex numbering in Figure \ref{fig:3line},
\begin{eqnarray}
\label{eq:elinematrix}
M=
\left [ \begin{array}{c c c }
\frac{5}{6} & \frac{1}{6} & 0 \\
\frac{1}{3} & \frac{1}{3} & \frac{1}{3} \\
0 & \frac{1}{6} & \frac{5}{6} 
\end{array} \right].
\end{eqnarray}
This matrix $M$ can be used to find the vector of probabilities of the origin of the left-most leaf individual. Represent this individual with the vector $[1,0,0]$. This yields,
\begin{eqnarray}
\label{eq:3linematrix}
[1,0,0]\left [ \begin{array}{c c c }
\frac{5}{6} & \frac{1}{6} & 0 \\
\frac{1}{3} & \frac{1}{3} & \frac{1}{3} \\
0 & \frac{1}{6} & \frac{5}{6} 
\end{array} \right]=\left[\frac{5}{6}, \frac{1}{6} ,0\right],
\end{eqnarray}
which captures the argument above: with probability $5/6$ the leaf individual was unaffected by the birth-death event and with probability $1/6$ it is an offspring of the hub individual. Another right-multiplication by $M$ yields the probability vector for two generations previous. And so on. 

To find the probability that a randomly chosen individual in the population at a time $t$, measured in the number of birth/death events, in the future is from the lineage originating from individual $i$ at the present time $t_0 =0$, we perform a calculation similar to the above on the vector $[1,1,1]$:
\begin{eqnarray}
\label{eq:Mlimit}
[1,1,1]\left [ \begin{array}{c c c }
\frac{5}{6} & \frac{1}{6} & 0 \\
\frac{1}{3} & \frac{1}{3} & \frac{1}{3} \\
0 & \frac{1}{6} & \frac{5}{6} 
\end{array} \right]^t.
\end{eqnarray}
This expression converges rapidly as $t$ increases \cite{bartonetheridge11}. Hence, the vector resulting from the calculation in Expression (\ref{eq:Mlimit}) above is stable to additional right-multiplications by $M$ for sufficiently large $t$. This vector is the vector of reproductive values and when normalized, yields the probability distribution of the origin of a randomly-chosen individual. This is captured in the following definition, which is a common contemporary version of Fisher's original reproductive value \cite{rousset04, bartonetheridge11}.

\begin{defn}
\label{defn:RV}
Let $G$ be a graph and $M$ be the backward neutral transition probability matrix defined above. The reproductive value of individual $i$ is the $i$th entry $V_i$ of the non-zero solution vector $V$ of the equation $V = V M$. That is, $V$ is the left eigenvector of $M$. 
\end{defn}

It is worth noting that the equation $V=VM$ does not have a unique solution for $V$; any non-zero multiple $c$ of a solution $V_0$ is also a solution. Therefore, reproductive values are understood throughout this article as relative values. 

In the neutral process on a graph, some vertices may be favoured by the population dynamics and the individual residing on such a vertex can expect to have a greater number of offspring. These natural differences need to be accounted for in an evolutionary analysis. In a non-neutral case, where the evolutionary outcome is determined by differences in fitness, some vertices may bestow a natural fitness advantage to the resident irrespective of the resident's trait value. Thinking in terms of evolutionary game theory, individuals residing on vertices interact along edges and experience gains and losses to fitness due to these interactions. These gains and losses may differ depending on who is receiving the benefit/cost \cite{taylor90}. An individual on a high-degree vertex may experience a loss of fitness, but this may be offset by the natural fitness advantage of residing on a high-degree vertex. These environment-mediated fitness differences must first be understood before proceeding with non-neutral evolutionary processes. 

If the population structure is very symmetric---like the lattice structure in Figure \ref{fig:lattice}---then all individuals have identical reproductive output. This is not the case for general, non-symmetric graphs, such as the line $3$-line graph in Figure \ref{fig:3line} or the wheel graph in Figure \ref{fig:wheel}. In those examples, the differences in degrees results in differences in how often an individual replaces, or is replaced by an offspring of, another individual. These differences in fitness are accounted by reproductive value.

\begin{figure}[ht]
\centering

\includegraphics[width=0.3 \textwidth]{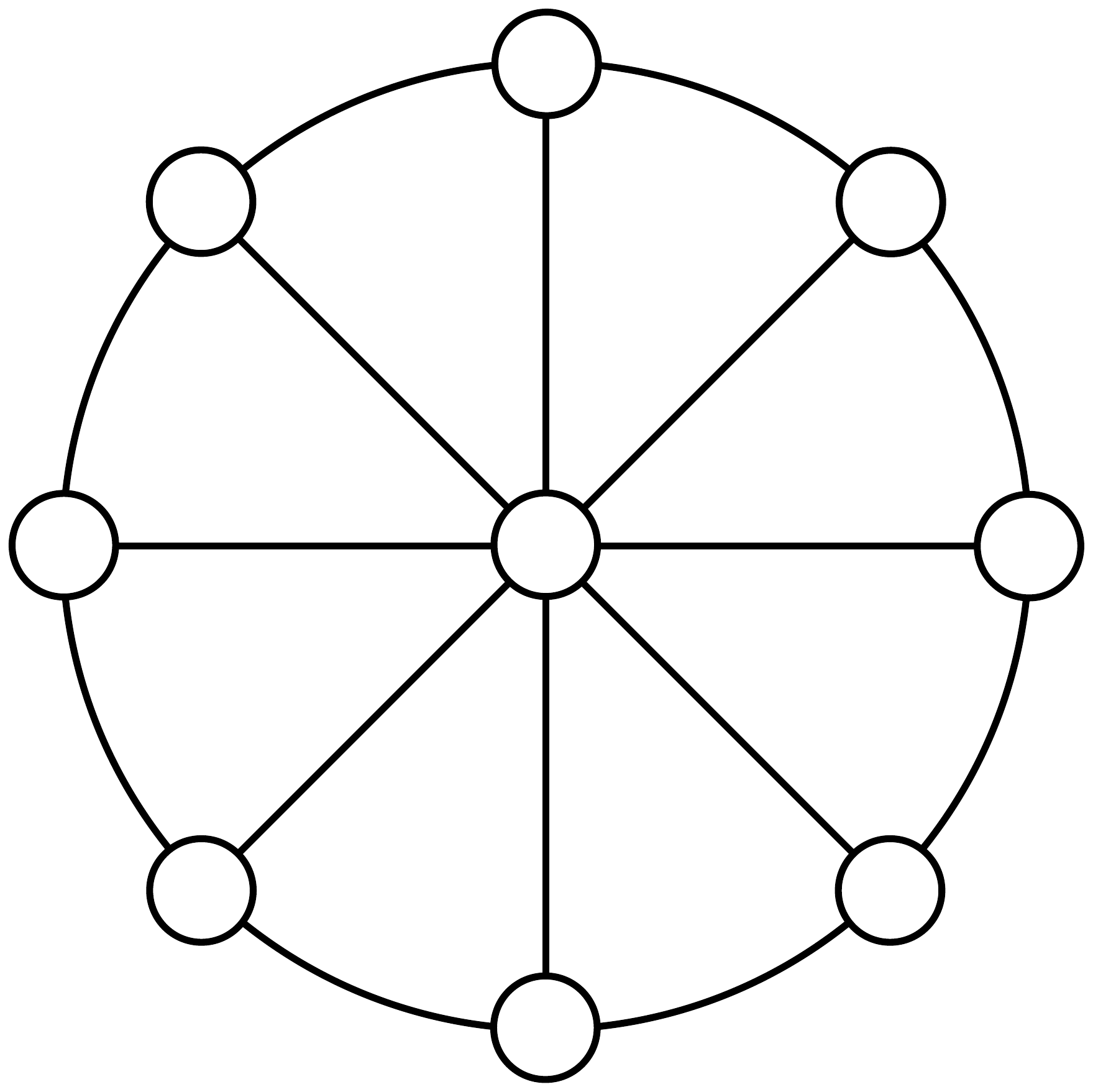} 
\caption{The wheel graph on $9$ vertices.}
\label{fig:wheel}
\end{figure}

As an illustrative example, consider the wheel graph in Figure \ref{fig:wheel}. There are two types of vertices, those on the periphery, denoted $v_P$, and the lone centre, denoted $v_H$. Consider, in turn, both a birth-death and a death-birth Moran process \cite{moran58} on this graph, and suppose that the population is neutral. In the birth-death process an individual is chosen at random to give birth and the resulting offspring displaces an adjacent neighbour at random. The individual at vertex $v_H$ is chosen to give birth with probability $1/9$, yet its neighbours are selected with probability $8/9$. Once a $v_P$ resident is selected, it displaces $v_H$ with probability $1/3$. In the death-birth process $v_H$ is chosen to die with probability $1/9$ but its neighbours are chosen with probability $8/9$. It would seem, then, that individuals at vertex $v_H$ is somehow ``better off'' in the death-birth than in the birth-death scheme. This is indeed the case. The way of quantifying ``better-off-ness'' is with reproductive value. 

\section{Metapopulations}

A metapopulation is a collection of \emph{demes} all linked by a dispersal pattern. Evolutionary graphs can often be thought of metapopulations where the vertices are demes and the edges are the dispersal pattern. Metapopulations were introduced by Levins \cite{levins70} as a means of describing populations with subpopulations experiencing extinction and re-colonization. Since \cite{levins70}, the scope and generality of metapopulation models has increased dramatically; see \cite{hanski98} for an introduction. 

Consider a metapopulation consisting of $N$ demes $v_1, v_2, \dots, v_N$. Each deme $v_i$ is a well-mixed population of fixed size $N_i$. The total population size is a constant, $N_{tot}$. After reproduction the offspring migrate to another deme with probability $m$ or stay on their natal deme with probability $1-m$ and for simplicity I assume the value of $m$ is identical for all demes. 


There are many possible population dynamics, for example, the Wright-Fisher process \cite{wright31}, imitation dynamics \cite{antalrednersood06}, and the Cannings process \cite{cannings74}. I restrict the focus of this article to two: the Moran death-birth and birth-death processes \cite{moran58}. In the death-birth process an individual is chosen at random to die. Suppose this individual resides on deme $v_i$. With probability $1-m$ the newly vacated site is occupied by the offspring of a deme mate. With probability $m$ it is occupied by the offspring of a member of a neighbouring deme $v_j$ chosen according to its relative size,
\begin{eqnarray}
\dfrac{N_j}{\displaystyle \sum_{k\in \mathcal{N}(v_i)}N_k},
\label{eq:disp}
\end{eqnarray}
where the sum is taken over all deme $v_i$'s neighbouring demes $\mathcal{N}(v_i)$. 

In the above definition I have assumed uniform dispersal probabilities to a deme. That is, if an individual on $v_i$ dies and is not replaced by a deme mate, then it is replaced by the offspring of a neighbouring deme $v_j$ with probability proportional to $v_j$'s size relative to the other neighbours of $v_i$. It is possible, however, that offspring are more likely to come from certain demes, regardless of the resident population size. 

Denote the probability that an individual chosen to die on $v_i$ is replaced by the offspring of an individual from $v_j$, conditional on the individual not being replaced by the offspring of another individual on deme $v_i$, by $w_{ji}$. If an individual dies on deme $v_i$ and the empty site is not taken by the offspring of a deme $v_i$ individual, then it is taken by the offspring from a neighbouring deme. Hence,
\begin{eqnarray}
\sum_{j\neq i} w_{ji} = 1.
\end{eqnarray}
With this notion of non-uniform dispersal probability, the probability that a newly vacated site on deme $i$ is occupied by an offspring of deme $j$ is given by
\begin{eqnarray}
\dfrac{w_{ji}N_j}{\displaystyle \sum_{k\in \mathcal{N}(v_i)}w_{ki}N_k}.
\label{eq:wdisp}
\end{eqnarray}

In the birth-death process an individual is chosen at random to reproduce and the new offspring either stays on its natal deme with probability $1-m$ and displaces a deme-mate or disperses to a neighbouring deme with probability $m$. The neighbouring deme is chosen according to the dispersal probabilities $u_{ij}$. Define $u_{ij}$ to be the probability that an offspring produced on $v_i$ disperses to and replaces an individual on deme $v_j$, conditional on the offspring not staying and replacing an individual on $v_i$. Note that, similar to the above,
\begin{eqnarray}
\sum_{j\neq i} u_{ij} = 1.
\end{eqnarray}

It is to be kept in mind that the $u_{ij}$ are the dispersal probabilities in the birth-death process while the $w_{ji}$ are in the death-birth process. In general, the $u_{ij}$ and $w_{ji}$ are not equal; a distinction I will draw in the next section. The difference between the two may seem subtle---$u_{ij}$ is the probability that an offspring produced on deme $i$ displaces an individual on deme $j$, while $w_{ji}$ is the probability that an empty site on deme $j$ is filled by an offspring from deme $i$---but must be kept in mind. The real difference between the $u_{ij}$ and $w_{ji}$ is the $u_{ij}$ are normalized with respect to the deme dispersed from, while $w_{ji}$ is normalized with respect to the deme dispersed to. This distinction allows Equations (\ref{eq:DB}) and (\ref{eq:BD}) to be easily generalized to graph-structured populations. Note that for uniform disperal probabilities on degree-regular graphs, $u_{ij} = w_{ji}$. 

I now derive equations for the reproductive values in the Moran death-birth and birth-death processes in metapopulations. To do this, I define a matrix $M$ similar to that in Definition \ref{defn:RV}, but where the entries are the indexed by demes, not individuals. Specifically, the $i$, $j$ entry of $M$ is the probability that a randomly chosen individual on deme $v_j$ was from the deme $v_i$ before a birth/death event. Definition \ref{defn:RV} then yields a reproductive value $V_i$ for each deme $v_i$. To find the reproductive value of an individual on deme $v_i$, simply divide the deme reproductive value by the size of the deme, $V_i/N_i$. In all, this yields the following. 
\begin{theorem}
\label{thm:RV}
Consider a metapopulation of size $N_{tot}$ residing on $N$ demes structured according to some graph $G$. Deme $v_i$ is of size $N_i$, where $1\leq i \leq N$. Denote the reproductive value of deme $v_i$ by $V_i$. 
\begin{enumerate}
\item For the death-birth process, the $V_i$ satisfy
\begin{eqnarray}
\label{eq:DB}
V_i = \sum_{j\in\mathcal{N}(v_i)}\dfrac{w_{ij}N_i}{\displaystyle \sum_{k\in\mathcal{N}(v_j)}w_{kj}N_k}V_j.
\end{eqnarray}
\item For the birth-death process, the $V_i$ satisfy
\begin{eqnarray}
\left ( \sum_{j\in\mathcal{N}(v_i)} u_{ji} N_j\right )V_i = N_i\sum_{j\in\mathcal{N}(v_i)}u_{ij} V_j.
\label{eq:BD}
\end{eqnarray}
\end{enumerate}
In both cases the sums are taken over all neighbours $\mathcal{N}(v_i)$ of $v_i$, or neighbours $\mathcal{N}(v_j)$ of $v_j$.
\end{theorem}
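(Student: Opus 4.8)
The plan is to reduce both statements to Definition \ref{defn:RV}: once the deme-level backward matrix $M$ is written down explicitly for each process, the reproductive values are \emph{by definition} the entries of the left eigenvector solving $V = VM$, so the two displayed identities are just this eigenvector equation after simplification. The organizing observation is that a within-deme replacement, which happens with probability $1-m$, leaves the deme of origin unchanged and therefore contributes only to the diagonal of $M$; the off-diagonal entries $M_{ij}$ with $v_j \in \mathcal{N}(v_i)$ come solely from migration events. Since every row of $M$ sums to one, the diagonal entries are then forced, and when I substitute into $V = VM$ the diagonal terms cancel against the $V_i$ on the left-hand side. This cancellation is exactly what turns an eigenvector equation into the clean balance relations (\ref{eq:DB}) and (\ref{eq:BD}).

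First I would treat the death-birth process. Here an individual is killed uniformly at random, so a death falls on deme $v_i$ with probability $N_i/N_{tot}$, and the vacancy is then filled from a neighbour $v_j$ with the probability recorded in (\ref{eq:wdisp}). To obtain the entry of $M$ that a uniformly chosen current resident of $v_i$ traces back to $v_j$, I multiply the death probability $N_i/N_{tot}$, the within-deme sampling factor $1/N_i$ (the chance that the sampled resident is the newly placed individual), and the dispersal probability from (\ref{eq:wdisp}). The factors $N_i$ cancel, the diagonal works out to a constant $1 - m/N_{tot}$ independent of $i$, and substituting into $V = VM$ and clearing the common factor $m/N_{tot}$ yields (\ref{eq:DB}) directly.

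Then I would treat the birth-death process by the same recipe, and I expect this to be the main obstacle. The difficulty is that the source and target demes now play asymmetric roles: the parent is chosen uniformly over individuals, so the event that the new resident of $v_i$ descends from $v_j$ carries the \emph{source} weight $N_j/N_{tot}$ together with the dispersal probability $m\,u_{ji}$ and the within-deme sampling factor $1/N_i$, rather than the symmetric cancellation seen in the death-birth case. Keeping precise track of which deme contributes a factor of its size, and of the $1/N_i$ sampling weight, is where care is required; a deme-level accounting that suppresses the within-deme sampling would produce the wrong powers of the $N_k$. After assembling these entries, computing the forced diagonal from the normalization $\sum_{j\neq i} u_{ij} = 1$, substituting into $V = VM$, and clearing $m/N_{tot}$, one is left with the relation (\ref{eq:BD}).

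As an independent check on precisely this deme-size bookkeeping, I would also derive the per-individual reproductive values $V_i/N_i$ directly from the neutral-drift conservation (martingale) property of the process and confirm that multiplying through by $N_i$ reproduces the deme-level equation. This serves both as a verification of the delicate birth-death factors and as a conceptual cross-check linking the eigenvector characterization of Definition \ref{defn:RV} to the intuitive ancestry interpretation of reproductive value.
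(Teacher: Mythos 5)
Your overall strategy is the paper's own: write down the deme-level backward matrix $M$ entrywise and read the two identities off the eigenvector equation $V=VM$. Your death--birth computation matches the paper's proof exactly --- the cancellation $\frac{N_i}{N_{tot}}\cdot\frac{1}{N_i}=\frac{1}{N_{tot}}$ is precisely why that case is clean and why the paper's displayed entry $p_{ji}=\frac{m}{N_{tot}}\frac{w_{ij}N_i}{\sum_k w_{kj}N_k}$ comes out right. The problem is the birth--death case, which the paper dismisses with ``found in a similar way'' and which you correctly identify as the delicate one: if you actually carry out your own prescription, you do \emph{not} land on Equation (\ref{eq:BD}). With your entries $M_{ij}=\frac{N_j}{N_{tot}}\,m\,u_{ji}\,\frac{1}{N_i}$, the forced diagonal is $M_{ii}=1-\frac{m}{N_{tot}N_i}\sum_j N_j u_{ji}$, and the $i$th component of $V=VM$ (which uses $M_{ji}=\frac{N_i}{N_{tot}}\,m\,u_{ij}\,\frac{1}{N_j}$) simplifies, after clearing $m/N_{tot}$, to
\begin{equation*}
\Bigl(\sum_{j\in\mathcal{N}(v_i)}u_{ji}N_j\Bigr)V_i \;=\; N_i^{2}\sum_{j\in\mathcal{N}(v_i)}\frac{u_{ij}}{N_j}\,V_j,
\end{equation*}
which coincides with (\ref{eq:BD}) only under the substitution $V_j\mapsto V_j/N_j$; that is, your bookkeeping yields the balance relation for the \emph{per-individual} reproductive values, not the deme values. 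So the closing assertion of your birth--death paragraph does not follow from the entries you constructed.

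To obtain (\ref{eq:BD}) exactly as stated one must take $M_{ij}$ to be the probability that deme $v_i$ \emph{receives} an immigrant offspring from deme $v_j$, i.e.\ omit the $1/N_i$ within-deme sampling factor --- the very step you (reasonably) warn would ``produce the wrong powers of the $N_k$.'' The two conventions are indistinguishable in the death--birth case (the $N_i$ from ``the death falls on deme $v_i$'' cancels the $1/N_i$ sampling weight) and on graphs where $N_i\equiv 1$, so neither the paper's worked case nor Corollary 1 can detect the discrepancy; only the heterogeneous-deme birth--death case does. Your proposed martingale cross-check is exactly the right instrument here, but note that it supports \emph{your} equation rather than (\ref{eq:BD}): for two demes joined by a single edge it gives per-individual values in ratio $N_1/N_2$, whereas (\ref{eq:BD}) forces them to be equal. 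As written, then, the birth--death half of your proof is incomplete --- you must either switch to the deme-level (no-sampling) matrix and say so, or prove the statement for $V_i/N_i$ and reconcile that with the theorem's claim that $V_i$ is the deme reproductive value.
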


\begin{proof}
This is done by simply calculating the columns of the matrix $M$. I demonstrate this for the death-birth process only, since Equation (\ref{eq:BD}) is found in a similar way. Entry $j$ in the $i$th column of $M$ is the probability $p_{ji}$ that an individual currently in deme $v_j$ was in deme $v_i$ before the death-birth event. For the entry $p_{ii}$, an individual on deme $v_i$ either was unaffected by the death-birth event, with probability $(N_{tot}-1)/N_{tot}$, or is the offspring of a $v_i$ deme mate, with probability $(1-m)/N_{tot}$. For all $p_{ji}$ with $j\neq i$, an individual on deme $v_j$ is the offspring of a deme $v_i$ individual with probability 
\begin{eqnarray}
\frac{m}{N_{tot}}\dfrac{w_{ij}N_i}{\displaystyle \sum_{k\in\mathcal{N}(v_j)}w_{kj}N_k}.
\end{eqnarray}
Substituting these expressions into the backward transition probability matrix $M$ and evaluating the equation for reproductive value in Definition \ref{defn:RV} for $V_i$ yields,
\begin{eqnarray}
V_i =\dfrac{N_{tot}-1}{N_{tot}}V_i + \dfrac{1-m}{N_{tot}}V_i + \frac{m}{N_{tot}}\sum_{j\in\mathcal{N}(v_i)}\dfrac{w_{ij}N_i}{\displaystyle \sum_{k\in\mathcal{N}(v_j)}w_{kj}N_k}V_j.
\end{eqnarray}
Simplifying gives Equation (\ref{eq:DB}) in Theorem 1.
\end{proof}

Theorem 1 demonstrates how the reproductive values in a metapopulation depend only on the size of the demes and the rates of dispersal. An interesting example to consider is a heterogeneous metapopulation that has all demes of the same reproductive value. Suppose such a metapopulation is structured according to the wheel graph of the previous section. Deme $v_H$ is of size $N_H$ and $v_P$ is of size $N_P$. Setting $V_P = V_H$ in the equation in Theorem 1 that describes the death-birth process yields a system of equations for $N_P$ and $N_H$ with solution $N_H = 6N_P$. That is, in a metapopulation structured according to the $9$-wheel graph, the reproductive values of all the demes are equal provided $N_H = 6N_P$. The individual reproductive values are obtained by dividing the deme RVs by the size of the deme. In this way it is seen that an individual in a periphery deme in a population undergoing a death-birth process has a greater reproductive value than on in the hub, despite both being members of deme with the same average reproductive value.

\section{Graph-structured Populations}

A graph-structured population is a special case of a metapopulation with $N_i=1$ for all $i$ and $N_{tot}= N$. There are a couple of ways we can analyse the reproductive value equations in Theorem 1 in the context of evolutionary graphs. First I consider the case that the probability of offspring dispersal from a vertex to a neighbouring vertex is uniform. That is, I set 
$$
w_{ji} = u_{ij} =  \left \{\begin{array}{c c}
         1/d_i & \hbox{if } v_i \hbox{ and } v_j \hbox{ are adjacent} \\
         0 & \hbox{otherwise}
	  \end{array} \right. ,
$$
where $d_i$ is the degree of vertex $i$ and $w_{ji}$ and $u_{ij}$ are the death-birth and birth-death dispersal probabilities, respectively, defined in the previous section. This yields the following solutions to the equations in Theorem 1. 

\begin{cor}
For an evolutionary graph with uniform dispersal from any vertex the reproductive values $V_i$ for the vertices $v_i$ of degrees $d_i$ are as follows.
\begin{enumerate}
\item For the death-birth process, 
\begin{eqnarray}
V_i = d_i
\end{eqnarray}
\item For the birth-death process,
\begin{eqnarray}
V_i = \frac{1}{d_i}.
\end{eqnarray}
\end{enumerate}
\end{cor}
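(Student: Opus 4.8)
The plan is to specialize the two reproductive-value equations of Theorem 1 to the graph-structured case and then verify the two claimed formulae by direct substitution, relying on the uniqueness of the solution up to scaling noted after Definition \ref{defn:RV}. First I would set $N_i = 1$ for every $i$ (so $N_{tot} = N$) and insert the uniform weights, which under the stated convention satisfy $w_{\alpha\beta} = 1/d_\beta$ and $u_{\alpha\beta} = 1/d_\alpha$ for adjacent vertices, into Equations (\ref{eq:DB}) and (\ref{eq:BD}). The decisive simplification is that the normalizing sums collapse: for the death-birth process $\sum_{k\in\mathcal{N}(v_j)} w_{kj} = \sum_{k\in\mathcal{N}(v_j)} 1/d_j = 1$, since $v_j$ has exactly $d_j$ neighbours each contributing $1/d_j$. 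This removes the denominator entirely and reduces Equation (\ref{eq:DB}) to $V_i = \sum_{j\in\mathcal{N}(v_i)} V_j / d_j$.

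With this reduced form in hand, I would test the candidate $V_i = d_i$: substituting $V_j = d_j$ makes each summand $d_j/d_j = 1$, so the right-hand side counts the neighbours of $v_i$ and returns $d_i$, confirming the equation. For the birth-death process I would carry out the analogous substitution in Equation (\ref{eq:BD}), taking care that $u_{ji} = 1/d_j$ is the outflow weight from $v_j$ while $u_{ij} = 1/d_i$ is the outflow weight from $v_i$. The equation then becomes $\left(\sum_{j\in\mathcal{N}(v_i)} 1/d_j\right) V_i = (1/d_i)\sum_{j\in\mathcal{N}(v_i)} V_j$, and the ansatz $V_i = 1/d_i$ makes both sides equal to $(1/d_i)\sum_{j\in\mathcal{N}(v_i)} 1/d_j$.

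Finally I would note that, for a connected graph, the backward matrix $M$ is an irreducible stochastic matrix, so by Perron--Frobenius its left eigenvector for eigenvalue $1$ is unique up to a scalar; since the two positive vectors exhibited above satisfy $V = VM$, they are the reproductive values as relative quantities, completing both parts. The only real pitfall I anticipate is bookkeeping: the asymmetry between $w_{ji}$ (normalized over the deme dispersed \emph{to}) and $u_{ij}$ (normalized over the deme dispersed \emph{from}) means the two processes distribute the degree factors differently, and getting the subscripts right in each collapsing sum is where the care is needed. Beyond this verification there is no genuine analytic obstacle.
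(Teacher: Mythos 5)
Your verification is correct and follows exactly the route the paper intends: the paper states this corollary without an explicit proof, leaving it as a direct substitution of the uniform weights $w_{ji}=u_{ij}=1/d_i$ into Equations (\ref{eq:DB}) and (\ref{eq:BD}), which is precisely what you carry out, with the subscript bookkeeping handled correctly in both cases. Your added Perron--Frobenius remark supplies the uniqueness-up-to-scaling claim that the paper only asserts informally after the corollary, so nothing is missing.
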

This corollary is very useful in describing the neutral process, which will be done next. First note that the equations in Theorem 1 have a degree of freedom, so there are an infinite number of solutions. But they are all scalar multiples of those given above.

I now consider the relationship between reproductive value and fixation probability. Suppose a population consists entirely of one type (type $B$) of individual. After a reproductive event a mutant (type $A$) appears. The probability that the progeny of the mutant go on to displace all resident types is the fixation probability $\rho_A$ of $A$. In general this fixation probability depends on where in the population the $A$ type emerges. Define $\rho_{A|i}$ as the fixation probability of an $A$ that emerges on vertex $v_i$. 

It is known (ex. \cite{leturquerousset02}) that the fixation probability of a neutral mutant in a metapopulation is equal to its relative reproductive value. This fact can easily be seen to be the case from a result of \cite{broomhadjichrysanthourychtarstadler10}. 

\begin{theorem}
Let $G$ be an evolutionary graph with $N$ vertices and suppose the edges are uniformly weighted. The fixation probability $\rho_{A|i}$ of a single $A$ type that emerges on vertex $v_i$ of $G$ in the neutral population is
\begin{eqnarray}
\label{eq:conditionalfixation}
\rho_{A|i} = \dfrac{V_i}{\sum_{j=1}^N V_j},
\end{eqnarray}
where $V_i$ is the reproductive value of vertex $v_i$. 
\label{thm:neutral}
\end{theorem}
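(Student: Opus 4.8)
The plan is to build a martingale out of reproductive value and finish with the optional stopping theorem. Track the forward neutral process and let $S_t\subseteq\{1,\dots,N\}$ be the set of vertices occupied by type $A$ after $t$ birth/death events, so that $S_0=\{i\}$ for a mutant emerging on $v_i$. Define the reproductive-value-weighted frequency
\begin{eqnarray}
\Phi_t = \frac{\sum_{k\in S_t} V_k}{\sum_{j=1}^N V_j}.
\end{eqnarray}
The process has exactly two absorbing states: loss of $A$, where $S_t=\emptyset$ and $\Phi_t=0$, and fixation of $A$, where $S_t=\{1,\dots,N\}$ and $\Phi_t=1$. Since $\Phi_0 = V_i/\sum_j V_j$, it suffices to show that $\Phi_t$ is a martingale, for then its expected limiting value is both $\Phi_0$ and the fixation probability.

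First I would compute the one-step conditional expectation using the backward transition matrix $M$ of Definition \ref{defn:RV}. The individual occupying vertex $k$ after a birth/death event is, with probability $p_{k\ell}$, descended from (or identical to) the individual that occupied vertex $\ell$ beforehand. Because the process is neutral, an offspring carries the parental type and an unaffected individual keeps its own type, so $k\in S_{t+1}$ precisely when the origin of $k$ lies in $S_t$. Hence
\begin{eqnarray}
E\big[\mathbf{1}[k\in S_{t+1}]\;\big|\;S_t\big]=\sum_{\ell\in S_t} p_{k\ell}.
\end{eqnarray}

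The key step is to weight by reproductive value and invoke the eigenvector equation $V=VM$. Multiplying by $V_k$, summing over $k$, and exchanging the order of summation gives
\begin{eqnarray}
E\Big[\sum_{k\in S_{t+1}} V_k\;\Big|\;S_t\Big]
=\sum_{\ell\in S_t}\sum_{k=1}^N V_k\, p_{k\ell}
=\sum_{\ell\in S_t} V_\ell,
\end{eqnarray}
where the final equality is exactly the $\ell$th component of $V=VM$, using $M_{k\ell}=p_{k\ell}$. Dividing through by $\sum_j V_j$ shows $E[\Phi_{t+1}\mid S_t]=\Phi_t$, so $\Phi_t$ is a martingale bounded in $[0,1]$. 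On a connected graph the neutral Moran process is absorbed in almost surely finite time $\tau$, so the optional stopping theorem yields
\begin{eqnarray}
\frac{V_i}{\sum_{j=1}^N V_j}=\Phi_0=E[\Phi_\tau]=\rho_{A|i}\cdot 1+(1-\rho_{A|i})\cdot 0=\rho_{A|i},
\end{eqnarray}
which is Equation (\ref{eq:conditionalfixation}).

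The main obstacle is the martingale identity itself: one must faithfully translate the forward dynamics into the backward probabilities $p_{k\ell}$ and keep the index convention aligned so that the reproductive-value sum collapses through $V=VM$. The neutrality of the process is what makes the type of a vertex equal the type of its backward origin, and this is precisely what couples the dynamics to the left eigenvector. By contrast, the almost-sure absorption required to apply optional stopping is routine for a finite connected graph. I note that this argument also makes explicit the remark that the conclusion follows from the known conservation of reproductive value under neutral drift.
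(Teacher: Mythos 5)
Your proof is correct, but it takes a genuinely different route from the paper's. The paper proceeds by first-step analysis: it writes down the harmonic (balance) equations that the fixation probabilities must satisfy, invokes Theorem \ref{thm:multineutral} to replace $\rho_{A|\{ij\}}$ by $\rho_{A|i}+\rho_{A|j}$, and then observes that the resulting linear system for the $\rho_{A|i}$ coincides with the reproductive value equation (Equation (\ref{eq:DB}) with $N_i=1$), so that for uniform weights $\rho_{A|i}=d_i$ up to normalization. You instead build the conserved quantity $\Phi_t$ directly from Definition \ref{defn:RV}: the identity $E[\mathbf{1}[k\in S_{t+1}]\mid S_t]=\sum_{\ell\in S_t}p_{k\ell}$ is exactly the right translation of the forward neutral dynamics into the backward matrix $M$, the eigenvector relation $V=VM$ collapses the double sum, and since $\Phi_t$ takes finitely many values the optional stopping step is immediate. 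Your argument buys three things the paper's does not: it makes no use of Theorem \ref{thm:multineutral} (indeed, running your martingale from an initial set $M$ rather than a singleton proves Theorem \ref{thm:multineutral} as a corollary, reversing the paper's logical order); it works verbatim for any update rule and any edge weighting for which $M$ is defined, so the uniform-weight hypothesis and the case split between death-birth and birth-death disappear; and it ties the result directly to the definition of reproductive value rather than re-deriving the degree formula of Corollary 1. What the paper's route buys in exchange is self-containedness --- it needs only elementary linear algebra and the uniqueness of solutions to the absorbing-chain system, with no appeal to martingale theory or almost-sure absorption. One small point worth making explicit in your write-up: boundedness of $\Phi_t$ follows from finiteness of the state space regardless of the sign of the entries of $V$, though Perron--Frobenius does guarantee a nonnegative choice of $V$ here.
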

A proof of this result is in the appendix. 

The fundamental question in evolutionary theory is, when does a mutant have an evolutionary advantage over a resident population? A natural condition is that the probability $\rho$ that the mutant fixes in the popultaion is greater than what it would be in the absence of selection. From early on in the evolutionary graph theory literature \cite{liebermanhauertnowak05} this condition took the form $\rho_A > 1/N$, where $A$ is the mutant and $N$ is the total population size. Theorem 2 indicates that this condition is insufficient for graphs with vertices of differing degrees. For an arbitrary graph, $\rho_{A|i}$ depends on $i$.  Notice, however,
\begin{eqnarray}
 \frac{1}{N}\sum_{i=1}^N\rho_{A|i} = \frac{1}{N}\sum_{i=1}^N\dfrac{V_i}{ \sum_{j=1}^NV_j}= \frac{1}{N}.
\end{eqnarray}

Returning to the wheel graph example, Theorem 2 allows for an easy calculation of the neutral fixation probability of a hub $\rho_{A|H}$ or periphery $\rho_{A|P}$ mutant on a wheel graph of arbitrary size, $N+1$. Table \ref{table} records these fixation probabilities for both the birth-death and death-birth processes.

\begin{table}[h]
\centering
\begin{tabular}{c | c | c |}
& DB & BD \\ 
\hline
$\rho_{A|H}$ & $1/4$ & $3/(N^2+3)$ \\ 
\hline
$\rho_{A|P}$ & $3/(4N)$ & $N/(N^2+3)$ \\
\hline
\end{tabular}
\caption{The fixaion probabilities for an allele that begins on a hub or on a periphery vertex for both the birth-death and death-birth Moran processes.}
\label{table}
\end{table}

A few interesting observations can be made at this point. First, in the death-birth process $\rho_{A|H}$ does not depend on the size of the population. This is understood as a balance between the probabilities that the hub or a periphery individual is chosen to die. For large populations the probability that the hub dies is essentially zero, yet the probability that the hub reproduces is fixed at $1/3$.  Second, for the birth-death process, both $\rho_{A|P}$ and $\rho_{A|H}$ go to $0$ as $N$ increases. This is because for large populations the probability that any one individual is chosen to reproduce in close to $0$.

An interesting extension of Theorem \ref{thm:neutral} is to the neutral fixation probability of a set $M$ of $A$ types. Such a fixation probability is defined as the probability that the population eventually consists entirely of all $A$ given that it initially started with a set $M\subset V(G)$ of $A$s. 
\begin{theorem}
The neutral fixation probability $\rho_{A|m}$ of a set $M$ of $A$ types on a graph $G$ undergoing either a birth-death or death-birth Moran process is 
\begin{eqnarray}
\rho_{A|M} = \sum_{i\in M} \rho_{A|i}.
\label{eq:multineutral}
\end{eqnarray}
\label{thm:multineutral}
That is, the neutral fixation probability of a set of $A$ types is the sum of the individual neutral fixation probabilities. 
\end{theorem}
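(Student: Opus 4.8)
The plan is to reduce Theorem 3 to Theorem 2 via the genealogical interpretation of reproductive value that was used to motivate Definition \ref{defn:RV}. The essential observation is that under neutral drift the labels $A$ and $B$ play no role in the dynamics: births and deaths are decided by the graph structure alone, so the genealogy of the population is independent of the initial type assignment. Accordingly I would track not the spread of the label $A$ but the ancestral lineages of the $N$ founding individuals present at time $t_0 = 0$.

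First I would establish that the neutral process coalesces. Since the configuration evolves as a finite Markov chain whose only absorbing states---with respect to which founder each current individual descends from---are the monomorphic states in which every vertex traces back to a single founding vertex, with probability one the population is eventually occupied entirely by the descendants of one initial individual, say the occupant of $v_i$. Write $C$ for the (random) index of this eventual common ancestor. The events $\{C = i\}$ for $i = 1, \dots, N$ are mutually exclusive and exhaustive, and by the interpretation made precise in Theorem 2 one has $\Pr(C = i) = V_i / \sum_j V_j = \rho_{A|i}$; note these already sum to $1$, confirming that exactly one founding lineage survives.

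Second I would match the fixation event to the identity of the common ancestor. Because the dynamics are label-blind, the population fixes for type $A$ precisely when the surviving founding lineage issues from a vertex initially carrying an $A$, that is, precisely when $C \in M$. Hence
\begin{eqnarray}
\rho_{A|M} = \Pr(C \in M) = \sum_{i \in M} \Pr(C = i) = \sum_{i \in M} \rho_{A|i},
\end{eqnarray}
which is the claim. This argument is uniform in the choice of birth-death versus death-birth updating, since only the label-blindness of the neutral dynamics and the conclusion of Theorem 2 are used; the distinction between the two processes is entirely absorbed into the values $V_i$.

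The main obstacle is the first step: rigorously justifying coalescence to a single founder and identifying $\Pr(C=i)$ with $\rho_{A|i}$. An equivalent, and perhaps cleaner, route avoids explicit genealogies by a martingale argument. Letting $X_t \subseteq V(G)$ denote the set of $A$-occupied vertices after $t$ events, one shows that the total reproductive value $Y_t = \sum_{i \in X_t} V_i$ is a bounded martingale under the neutral dynamics---this is exactly the conservation property that $V$ is designed to encode, and it follows from the defining relation $V = VM$ by a one-step computation analogous to the proof of Theorem 1. The chain is absorbed at $Y_\infty = \sum_j V_j$ (fixation of $A$) or $Y_\infty = 0$ (loss of $A$), so optional stopping gives $\sum_{i \in M} V_i = Y_0 = \mathbb{E}[Y_\infty] = \rho_{A|M}\sum_j V_j$, and dividing by $\sum_j V_j$ yields the result. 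Verifying the martingale identity for each update rule is the one place requiring care, but it is a routine computation once $M$ is in hand.
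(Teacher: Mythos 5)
Your argument is correct in substance but follows a genuinely different route from the paper's. The paper proves Theorem 3 by direct verification: it writes down the first-step recurrence $\rho_{\mathcal{S}} = \sum_{\mathcal{T}} P_{\mathcal{S},\mathcal{T}}\rho_{\mathcal{T}}$ over all states, notes that with the boundary conditions $\rho_{\emptyset}=0$ and $\rho_{V(G)}=1$ the system has a unique solution, and checks algebraically that the additive ansatz $\sum_{i\in M}\rho_{A|i}$ satisfies it, the key point being that under neutrality the probability flux from an $A$-vertex $j$ to a $B$-vertex $k$ cancels against the reverse flux. Your genealogical argument replaces this computation with the observation that the neutral dynamics are label-blind, so fixation of $A$ is exactly the event that the surviving founding lineage issues from $M$, and the additivity of $\rho_{A|M}$ is just additivity of probability over the disjoint events $\{C=i\}$; this is cleaner and more illuminating. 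One wrinkle: you invoke Theorem 2 to identify $\Pr(C=i)$, but in the paper Theorem 2 is \emph{deduced from} Theorem 3, so quoting it here would be circular within the paper's logic. The fix is immediate---you do not need the value $V_i/\sum_j V_j$ at all, only the identity $\Pr(C=i)=\rho_{A|i}$, which is the definition of $\rho_{A|i}$ read through the label-blindness observation (take $M=\{i\}$). Your alternative martingale route avoids the issue entirely and is arguably the strongest of the three: the one-step computation showing that $\sum_{i\in X_t}V_i$ has zero expected increment is precisely the eigenvector relation $V=VM$ rearranged, and optional stopping then delivers Theorems 2 and 3 simultaneously with no appeal to uniqueness of solutions of the recurrence. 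Either of your routes still owes a short argument that absorption (coalescence to a single founding lineage, equivalently fixation or loss) occurs with probability one on a connected graph---a point the paper also leaves implicit.
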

A proof of this theorem in found in the appendix.

\begin{figure}
\centering
\subfigure[]{\includegraphics[width=0.3\textwidth]{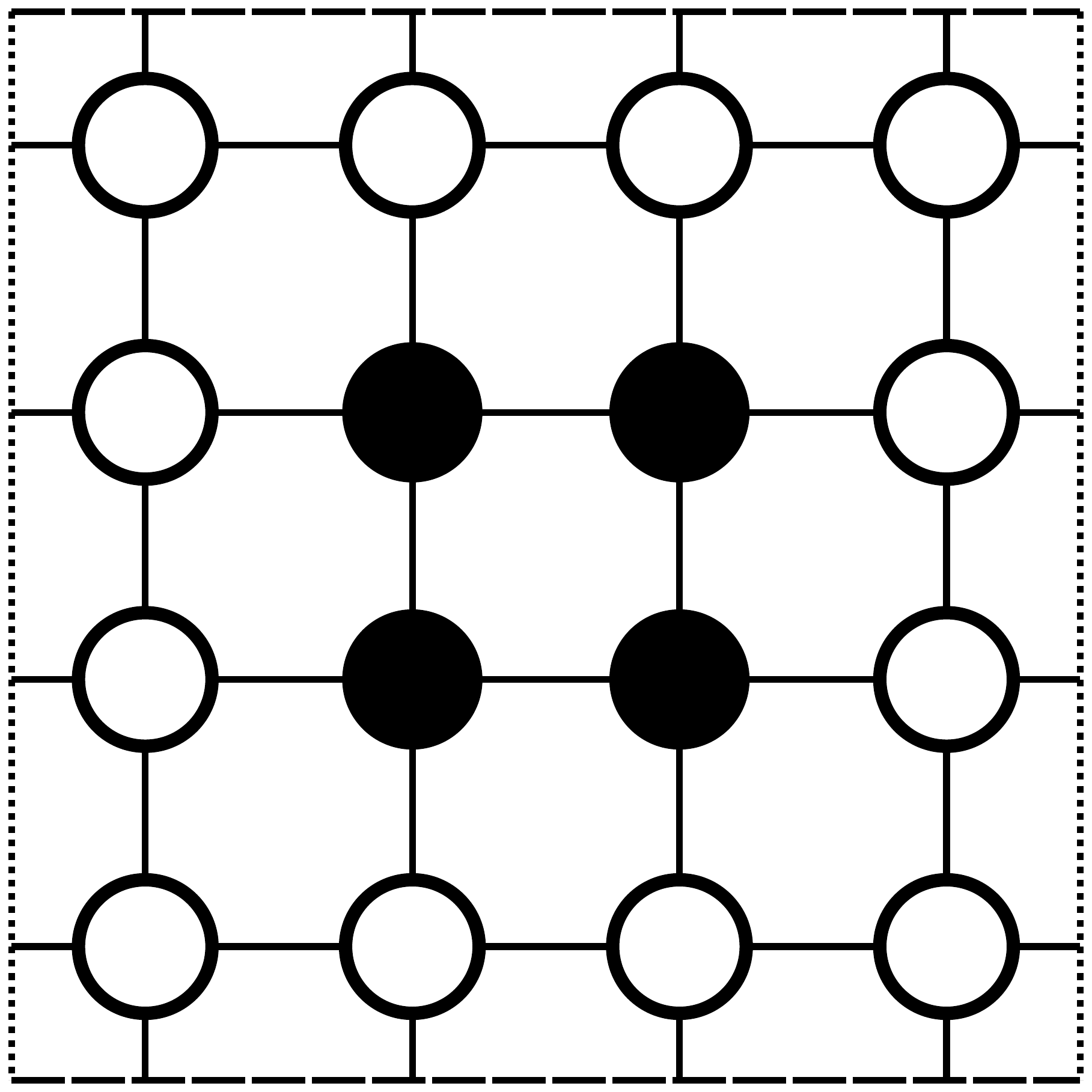}}
\hspace{1cm}
\subfigure[]{\includegraphics[width=0.3\textwidth]{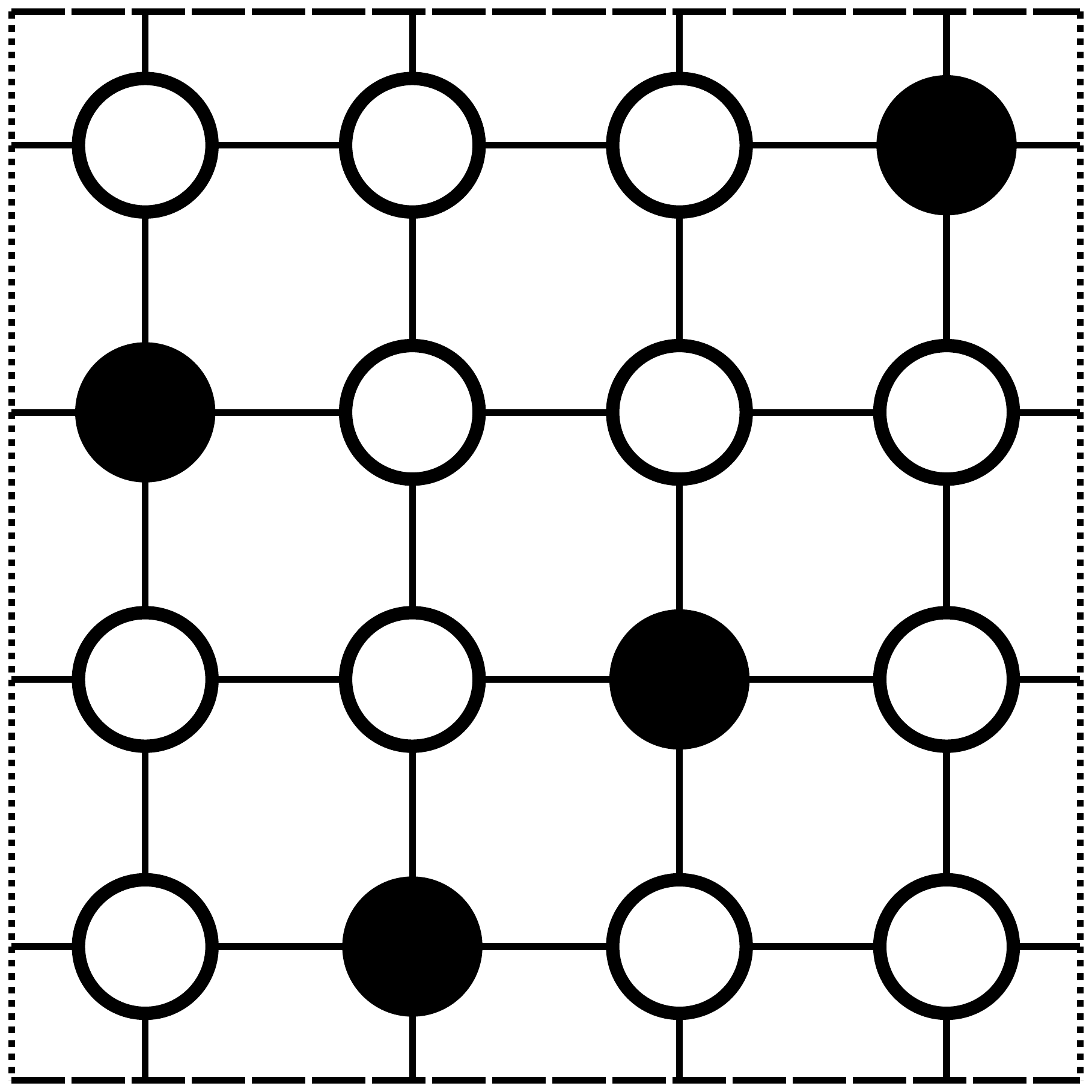}}
\caption{The fixation probability of a set of individuals is the sum of the fixation probabilities of the individuals in the set. In this example on a lattice, the set of black individuals has the same fixation probability whether they are clumped (a) or spread out (b). }
\label{fig:lattice}
\end{figure}

This theorem is remarkable in that the configuration of the $A$ types is irrelevant. It does not matter if the set $M$ is clustered or spread about the graph; the fixation probability is the same; see Figure \ref{fig:lattice}.

\section{Non-neutral Cases}

So far I have analysed the equation in Theorem 1 by supposing that the dispersal from any vertex $i$ to a neighbouring $j$ is $w_{ji}=u_{ij}=1/d_i$. This need not be the case. One could imagine a population residing in a windy or a stream environment that results in preferential dispersal. Removing the assumption of uniform dispersal makes Theorem 1 less transparent. Relating reproductive value to an existing object in the study of evolutionary graphs, the \emph{temperature} of vertices as introduced in \cite{liebermanhauertnowak05}, allows us to gain some traction.

For the birth-death process, the temperature $T_i$ of a vertex $v_i$ is 
\begin{eqnarray}
T_i = \sum_{j\in \mathcal{N}(v_i)} u_{ji},
\label{eq:utemp}
\end{eqnarray}
where the sum is over all neighbours of $v_i$. If the graph is weighted with $w_{ji}$ weights, as in the death-birth process, the above definition can be rewritten accordingly:
\begin{eqnarray}
T_i = \sum_{j\in \mathcal{N}(v_i)} w_{ij}.
\label{eq:wtemp}
\end{eqnarray}
Equation (\ref{eq:wtemp}) is not the definition of temperature as found in, e.g. \cite{liebermanhauertnowak05, nowak06}. Previous work on the temperature of vertices has only considered the birth-death process. As we have seen in Equation (\ref{eq:DB}) it is necessary to introduce $w_{ji}$ for the death-birth process. Recall that $\sum_{j\in \mathcal{N}(v_i)} w_{ji} = 1$. The $T_i$ in Equation (\ref{eq:wtemp}) does not necessarily equal $1$, and therefore plays the same role for the $w_{ij}$ that the temperature $T_i$ in Equation (\ref{eq:utemp}) does for the $u_{ji}$. It can be shown that the existing results on temperature, including Theorem \ref{thm:isothermal} below, also hold for graphs carrying the $w_{ji}$ weightings.

The fundamental result concerning the temperatures on an evolutionary graph is the isothermal theorem of \cite{liebermanhauertnowak05} (see also \cite{nowak06}). Suppose a mutant with fecundity $r$, where $r>1$, emerges in a population of individuals each having fecundity $1$. The population updates with a Moran process and the probability $\rho$ that the mutant fixes in the population is observed.  This is the constant-fecundity process \cite{liebermanhauertnowak05}.

The results of \cite{liebermanhauertnowak05} are that for an \emph{isothermal} graph, where all vertices have the same temperature, the fixation probability is exactly what one would find in a unstructured population---that is, where all verticies are adjacent; a \emph{complete} graph---of the same size, $N$.
\begin{theorem} (Lieberman et al., 2005)
Let $G$ be a graph and $T_i$ be the temperature of the vertex $v_i$. For the constant-fecundity process described above,
\begin{eqnarray}
\rho_A = \dfrac{1-1/r}{1 -1/r^N}
\end{eqnarray}
if, and only if, 
\begin{eqnarray}
T_i = T_j \ \ \forall \ i, j \in V(G).
\label{eq:isothermal}
\end{eqnarray}
\label{thm:isothermal}
\end{theorem}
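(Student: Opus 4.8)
The plan is to track the process as a Markov chain on the configurations $S\subseteq V(G)$, where $S$ records which vertices currently carry the mutant $A$. Writing $\rho(S)$ for the probability that the population eventually becomes all-$A$ when started from $S$, the map $S\mapsto\rho(S)$ is the harmonic function of this chain with boundary values $\rho(\emptyset)=0$ and $\rho(V)=1$, and the desired quantity is $\rho(\{v_i\})$. The first thing I would record is that in the birth--death process a replacement alters the configuration only when a reproducing vertex places its offspring on a vertex of the \emph{opposite} type, so from $S$ the chain can move only to $S\cup\{v_j\}$ (the mutant count rises by one), to $S\setminus\{v_j\}$ (it falls by one), or stay put. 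Thus the count changes iff the configuration changes, by exactly one vertex.

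For the ``if'' direction I would write the two boundary-crossing probabilities explicitly. With total fitness $F=r|S|+(N-|S|)$, the count rises with probability $\frac{r}{F}\,\Phi(S\to\bar S)$ and falls with probability $\frac{1}{F}\,\Phi(\bar S\to S)$, where $\Phi(S\to\bar S)=\sum_{i\in S}\sum_{j\notin S}u_{ij}$ and $\Phi(\bar S\to S)=\sum_{i\notin S}\sum_{j\in S}u_{ij}$ are the dispersal flows across the boundary of $S$. The key algebraic observation is that an isothermal graph forces $u$ to be doubly stochastic: each row sums to $1$, so equal temperatures $T_i=T$ give $NT=\sum_i T_i=\sum_{i,j}u_{ij}=N$, whence $T=1$ and every column sum (which is exactly $T_i$) equals $1$. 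Using row and column sums equal to $1$, a short computation gives $\Phi(S\to\bar S)=|S|-\sum_{i,j\in S}u_{ij}=\Phi(\bar S\to S)$ for \emph{every} $S$, so the ratio of the down- to the up-probability is the constant $1/r$, independent of the configuration.

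With this constant ratio I would verify the ansatz $\rho(S)=\phi(|S|)$. Substituting it into the harmonicity relation $\rho(S)=p_{\mathrm{stay}}\,\rho(S)+P_{\mathrm{up}}\,\phi(k+1)+P_{\mathrm{down}}\,\phi(k-1)$ (where $k=|S|$) and using $p_{\mathrm{stay}}=1-P_{\mathrm{up}}-P_{\mathrm{down}}$ collapses, precisely because $P_{\mathrm{down}}/P_{\mathrm{up}}=1/r$ at every $S$, to the gambler's-ruin recurrence $\phi(k+1)-\phi(k)=\frac{1}{r}(\phi(k)-\phi(k-1))$ with $\phi(0)=0$ and $\phi(N)=1$. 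Its solution is $\phi(k)=\frac{1-(1/r)^k}{1-(1/r)^N}$, and by uniqueness of absorption probabilities $\rho(S)=\phi(|S|)$; evaluating at $k=1$ yields $\rho_A=\frac{1-1/r}{1-1/r^N}$, the same value for every starting vertex. The point worth stressing is that the configuration-independence of $\rho$ is \emph{forced} by the flow balance, not assumed.

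The converse is where the real work lies, and I expect it to be the main obstacle. Reducing the flow-balance identity to singletons pins down temperatures exactly: for $S=\{v_i\}$ one has $\Phi(\{v_i\}\to\overline{\{v_i\}})=\sum_{j\neq i}u_{ij}=1$ while $\Phi(\overline{\{v_i\}}\to\{v_i\})=\sum_{j\neq i}u_{ji}=T_i$, so balance at $\{v_i\}$ holds iff $T_i=1$; hence isothermality is equivalent to flow balance at \emph{every} configuration. The difficulty is to promote ``flow balance fails at some $S$'' into ``$\rho_A$ differs from $\frac{1-1/r}{1-1/r^N}$'', since a single imbalanced configuration need not obviously perturb the global absorption probability when counts are coupled through the chain. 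The cleanest route I would pursue is to treat both quantities as rational functions of $r$ and compare near neutrality: expand $\rho_A$ in $s=r-1$, where the zeroth-order term is $1/N$ (as in the remark following Theorem \ref{thm:neutral}) and the linear term can be expressed through the neutral reproductive-value weighting of Theorem \ref{thm:neutral}, which by Corollary/Theorem~1 is constant exactly when all $T_i=1$. Matching the first-order coefficient to that of the complete-graph formula, together with the singleton reduction, should force $T_i=1$ for all $i$. Establishing that this first-order obstruction is genuine---that non-constant temperatures truly shift the linear term rather than cancelling---is the delicate step, and I would expect it to consume the bulk of the argument.
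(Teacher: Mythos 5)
This theorem is imported from Lieberman et al.\ (2005): the paper states it with a citation and offers no proof of its own, so there is nothing in-paper to compare against; I will judge your argument on its own terms. Your forward direction is correct and complete. The observation that equal temperatures force the dispersal matrix to be doubly stochastic (row sums are $1$ by definition, so the common column sum must also be $1$), the resulting flow-balance identity $\Phi(S\to\bar S)=|S|-\sum_{i,j\in S}u_{ij}=\Phi(\bar S\to S)$ for every configuration, and the collapse of the harmonic system to the gambler's-ruin recurrence with constant ratio $1/r$ is exactly the standard proof, and your appeal to uniqueness of the absorption probabilities to justify the ansatz $\rho(S)=\phi(|S|)$ is the right way to close it.

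The converse, however, is not proved: you candidly flag that the step ``non-constant temperatures truly shift the linear term rather than cancelling'' is unestablished, and that is precisely where all the content of the ``only if'' direction sits, so as written the proposal proves only one implication. Note also that if $\rho_A$ is read as the fixation probability of a mutant placed at a \emph{uniformly random} vertex, its value at $r=1$ is $1/N$ for every connected graph, so your zeroth-order term is vacuous and everything hangs on the unverified first-order comparison; moreover, for a converse one must be explicit that the identity is assumed for all $r$ (or a range of $r$), since equality at a single $r$ does not force isothermality. A cleaner route, entirely internal to this paper, is available if the hypothesis is read per starting vertex: $\rho_{A|i}(r)$ is a rational function of $r$, hence continuous at $r=1$, so $\rho_{A|i}(r)=\bigl(1-1/r\bigr)/\bigl(1-1/r^N\bigr)$ for all $r>1$ forces $\rho_{A|i}(1)=1/N$ for every $i$; by Theorem \ref{thm:neutral} this means all reproductive values are equal, and Theorem 4 of the paper then yields isothermality directly. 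I would either carry out that limiting argument carefully or complete the first-order expansion you sketch; as it stands the ``only if'' half is a genuine gap.
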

Equation (\ref{eq:isothermal}) is the isothermal condition. This relates nicely to reproductive value.

\begin{theorem}
A graph is isothermal if, and only if, all vertices have the same reproductive value. 
\end{theorem}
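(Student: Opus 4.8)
The plan is to specialize Theorem~\ref{thm:RV} to the graph-structured case $N_i=1$, $N_{tot}=N$, where by Definition~\ref{defn:RV} the reproductive value $V$ is the left eigenvector of eigenvalue $1$ of the row-stochastic backward matrix $M$. For a connected graph $M$ is irreducible and has positive diagonal (every individual has a nonzero ``from itself'' probability), hence primitive, so its left eigenvector $V$ is strictly positive and unique up to a scalar. This uniqueness is the tool that will let me pass freely between ``the all-ones vector solves $V=VM$'' and ``$V$ is constant.'' With it I would reduce the theorem to the chain of equivalences: all $V_i$ equal $\iff$ $M$ is doubly stochastic $\iff$ $T_i=1$ for every $i$ $\iff$ the graph is isothermal, handling the birth-death and death-birth matrices in parallel.

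The first equivalence is formal: writing $\mathbf{1}$ for the all-ones row vector, $V=VM$ with $V\propto\mathbf{1}$ is exactly $\mathbf{1}M=\mathbf{1}$, i.e. the columns of $M$ sum to $1$; since $M$ is already row-stochastic this is precisely double stochasticity, and uniqueness of $V$ supplies the converse. For the second equivalence I would read the column sums off the entries of $M$. In the birth-death process $M_{ii}=1-T_i/N$ and $M_{ki}=u_{ik}/N$ for $k\in\mathcal{N}(v_i)$, so column $i$ sums to $1+(1-T_i)/N$ after using $\sum_{k\neq i}u_{ik}=1$; in the death-birth process $M_{ii}=(N-1)/N$ and $M_{ki}=w_{ik}/N$, so column $i$ sums to $1+(T_i-1)/N$ after using $T_i=\sum_{j\in\mathcal{N}(v_i)}w_{ij}$. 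In either process a column sum equals $1$ precisely when $T_i=1$, giving double stochasticity $\iff T_i=1$ for all $i$.

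The last equivalence is where the genuine content lies and is the step I expect to be the main obstacle, because ``isothermal'' only asserts that the $T_i$ are \emph{equal}, whereas I need them equal to the specific value $1$. The fix is the global identity $\sum_i T_i=N$, obtained by interchanging the order of summation in $\sum_i T_i$ and invoking the dispersal normalizations $\sum_{j\neq i}u_{ij}=1$ and $\sum_{j\neq i}w_{ji}=1$; the mean temperature is therefore $1$, so equal temperatures must all equal $1$. The reverse implication needs no averaging at all: substituting a constant $V$ directly into the recursions (\ref{eq:BD}) and (\ref{eq:DB}) collapses them to $T_i=1$ at each vertex. Aside from this conceptual point, the only real hazard is bookkeeping --- keeping the source/sink roles of $u_{ij}$ and $w_{ji}$ straight when extracting the entries of $M$, since these two families are normalized over opposite indices.
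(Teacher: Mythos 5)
Your proof is correct, but it takes a genuinely different route from the paper's. The paper argues directly from the reproductive-value recursions (\ref{eq:DB}) and (\ref{eq:BD}): the direction ``constant $V$ $\Rightarrow$ isothermal'' is the same one-line substitution you use at the end, but the converse is proved by a discrete maximum principle --- pick a vertex of maximal $V$ with a strictly smaller neighbour and one of minimal $V$ with a strictly larger neighbour, deduce $T_k>1$ and $T_l<1$, and contradict $T_k=T_l$. You instead lift everything to the transition matrix $M$, invoke primitivity (positive diagonal plus irreducibility on a connected graph) to get uniqueness of the left Perron vector, and thereby convert the theorem into ``$M$ is doubly stochastic iff all $T_i=1$''; your column-sum computations, $1+(1-T_i)/N$ for birth-death and $1+(T_i-1)/N$ for death-birth, are correct. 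The obstacle you correctly flag --- that isothermality only gives equal temperatures, not temperatures equal to $1$ --- you resolve with the averaging identity $\sum_i T_i=N$; the paper sidesteps it entirely because its contradiction produces $T_k>1>T_l$ without ever needing the common value. What each approach buys: the paper's argument is elementary and avoids Perron--Frobenius, but leans implicitly on connectivity and positive edge weights to guarantee the extremal vertex has a neighbour with strictly different $V$; yours makes those hypotheses explicit and, as a bonus, recovers the doubly-stochastic characterization of isothermal graphs underlying Theorem \ref{thm:isothermal}, which ties the two results together more transparently.
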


\begin{proof}
First, assume $V_i = V_j$ for all vertices $v_i$ and $v_j$ of $G$. From Equation \ref{eq:DB}, I have
\begin{eqnarray}
V_i = \sum_{j\in \mathcal{N}(v_i)}w_{ij}V_j \ \Longrightarrow \ \underbrace{\sum_{j\in\mathcal{N}(v_i)}w_{ij}}_{T_i} = 1
\end{eqnarray}
for the death-birth process, and
\begin{eqnarray}
\left ( \sum_{j\in\mathcal{N}(v_i)} u_{ji} \right )V_i = \sum_{j\in\mathcal{N}(v_i)}u_{ij} V_j \ \Longrightarrow \ \underbrace{\sum_{j\in\mathcal{N}(v_i)}u_{ji}}_{T_i} = \sum_{j\in \mathcal{N}(v_i)} u_{ij} = 1
\end{eqnarray}
for the birth-death process. 

Now assume $T_i = T_j$ for all vertices $v_i$ and $v_j$ of $G$. Suppose, for contradiction, that not all vertices of $G$ have the same reproductive value. There exists a vertex $v_k$ such that $V_k$ is the maximum of all reproductive values and at least one neighbour of $v_k$ has a reproductive value strictly less than $V_k$. Similarly, let vertex $v_l$ be such that $V_l$ is the minimum of all reproductive values and at least one neighbour of $v_l$ has a reproductive value greater than $V_l$. Consider the death-birth process; the argument for the birth-death process is similar. From Equation \ref{eq:DB}, 
\begin{eqnarray}
V_k = \sum_{j\in\mathcal{N}(v_k)} w_{kj} V_j <  \sum_{j\in\mathcal{N}(v_k)} w_{kj} V_k \ \Longrightarrow  \ T_k= \sum_{j\in\mathcal{N}(v_k)} w_{kj} > 1.
\end{eqnarray}
Also,
\begin{eqnarray}
V_l = \sum_{j\in\mathcal{N}(v_l)} w_{lj} V_j >  \sum_{j\in\mathcal{N}(v_l)} w_{lj} V_l \ \Longrightarrow \  T_l=\sum_{j\in\mathcal{N}(v_l)} w_{lj} < 1.
\end{eqnarray}
Hence, $T_k \neq T_l$, which is a contradiction. 
\end{proof}

In \cite{broomrychtar08} the authors prove that, assuming dispersal from a vertex is uniform, a graph is isotermal if, and only if, the graph is regular. In light of Corollary 1 or Theorem 4, an analogous result exists for reproductive value. An interesting question is, is it possible for the vertices of a non-regular graph to all have the same reproductive value? The answer is yes, as is seen by, once again, returning to the wheel graph example. For the wheel graph on $9$ vertices in Figure 1 consider the birth-death process and define the dispersal probabilities
$$
u_{PH} = \frac{1}{8}, \ u_{PP} = \frac{7}{16}, \ \hbox{ and,} \ u_{HP} = \frac{1}{8}.
$$
This example is easily seen to be isothermal and hence, by Theorem 4, all vertices have the same reproductive value. 

For the death-birth process and constant fecundity, higher-degree vertices are favoured for the emergence of more fecund alleles, since they confer a natural advantage: higher degree means a greater likelihood of a neighbour dying which translates into a greater-than-average chance of placing an offspring. The situation is reversed for the birth-death process: lower degree means less-than-average chance of being displaced by a neighbour's offspring. In both cases the favourable vertex is one with a high reproductive value. 

Previous work on degree-heterogeneous graphs \cite{broomrychtarstadler11, antalrednersood06} has reached the conclusion that the death-birth process favours mutants that emerge on vertices of high degree, while the birth-death process favours mutants arising on low degree vertices. This is precisely what is found in the corollary to Theorem \ref{thm:RV}. However, rather than viewing the results of \cite{broomrychtarstadler11, antalrednersood06} as two separate cases, the above results on reproductive value allow us to observe a general phenomenon: the vertices that favour the mutant allele in the constant-selection framework are those with the greatest reproductive value, regardless if the update rule is death-birth or birth-death. The difference in fecundity acts to embelish the effect of reproductive value. Reproductive value provides a unifying concept for these results.

\subsection{Evolutionary Games}

I now consider evolutionary games on graphs. Consider a population consisting of two types of individuals: $A$s and $B$s. Each pair connected by an edge gives and receives payoffs according to the game matrix
\begin{eqnarray}
\label{eq:prisonersdilema}
\begin{array}{c | c | c |}
\ & A & B \\
\hline
A & b-c & -c \\
\hline
B & b & 0 \\
\hline
\end{array}
\end{eqnarray}
For $b,c>0$, this is the additive prisoner's dilemma game. 

The payoffs accrued by individuals interacting according to the game in Matrix (\ref{eq:prisonersdilema}) translate into fecundity. The fecundity of an individual $i$ is 
\begin{eqnarray}
f_i=e^{\delta P},
\label{eq:fecundity}
\end{eqnarray} 
where $\delta$ is the strength of selection and $P$ is the payoff received from playing the game with their neighbours \cite{liebermanhauertnowak05}. For example, if an $A$ individual has one $A$ and two $B$ neighbours then their total payoff is $P= b-c + 2(-c)=b-3c$. Once these fecundity values are calculated, a population update occurs. For the death-birth process, an individual $i$ dies with probability $1/N$ and is replaced by an offspring of its neighbour $j$ with probability
\begin{eqnarray}
\frac{f_j}{f_{tot}},
\end{eqnarray}
where $f_{tot}$ is the total fecundity of all the neighbours of $i$. For the birth-death process, an individual $i$ is chosen for reproduction with probability 
\begin{eqnarray}
\frac{f_i}{f_{tot}},
\end{eqnarray}
and the offspring displaces a neighbour of $i$ with uniform probability, $1/d_i$.

To illustrate the effects of reproductive value on the outcome of an evolutionary game, I consider the simplest example of a heterogeneous graph, the $3$-line in Figure {\ref{fig:3line}}. Denote a end point vertex with the subscript $p$ and the central hub vertex with $h$. I consider only the birth-death process.

The reproductive values for the birth-death process are easily calculated from Corollary 1:
\begin{eqnarray}
V_p = 1, \ \hbox{ and, } \ V_h = \frac{1}{2}.
\end{eqnarray}
Hence, in the neutral process, where $\delta=0$ in Equation (\ref{eq:fecundity}),
\begin{eqnarray}
\rho_p= \frac{2}{5}, \ \hbox{ and, }  \ \rho_h = \frac{1}{5}.
\end{eqnarray}
The neutral fixation probabilies $\rho_{neutral}$ give us a condition for the spread of the strategy $A$: $A$ is favoured by evolution provided $\rho_A> \rho_{neutral}$. Note that, for regular graphs, a class of graph that includes complete, cycles, lattices, and all vertex-transitive graphs, this condition reduces to $\rho_A > 1/N$, which is the condition commonly found in the literature \cite{nowak06}. In the present example, an $A$ type is favoured by evolution provided $\rho_{A|p}>\rho_{neutral|p}=2/5$ if it emerges on an end vertex and $\rho_{A|h}>\rho_{neutral|h}=1/5$ on the hub vertex.

I now calculate the probability that a single $A$ reaches fixation in a population otherwise comprised of all $B$. To do this, I assume weak selection. This means that $\delta  \ll 1$ in Equation (\ref{eq:fecundity}). This allows for an accurate Taylor series approximation for Equation (\ref{eq:fecundity}). 

Now suppose that the $3$-line population initially consits of all $B$. A mutant $A$ appears on one of the end point vertices. It is easy to show, by solving a system of equations that describes the fixation probability, that the fixation probability of this mutant is
\begin{eqnarray}
\rho_{A|p} = \frac{2}{5} - \left(\frac{14}{25}c+\frac{6}{25}b\right) \delta + o(\delta).
\end{eqnarray}
The condition $\rho_{A|p}>\rho_{neutral|p}$ yields a condition on the $b$ and $c$ parameters. Namely, an $A$ type on an end point vertex is favoured by evolution provded $b/c<-7/3$. This condition is clearly never satisfied for positive $b$ and $c$. If, however, $b<0$, then the condition can be satisfied. This is an example of a \emph{spiteful} behaviour: an individual pays a cost to purposely harm another \cite{gardnerwest06}. 

A similar calculation reveals that the fixation probability of an $A$ type that emerges on the hub vertex is 
\begin{eqnarray}
\rho_{A|h} = \frac{1}{5} - \left(\frac{12}{25}c+\frac{14}{75}b\right) \delta + o(\delta).
\end{eqnarray}
This $A$ is favoured by evolution provided $b/c<-18/7$. Again, this is satisfied only when $b<0$. 

To compare these two results for the fixation probability of spite, suppose that the cost of the spiteful act is fixed at $c=1$. Then it is seen that the hub requires a higher level of spite than the end point vertices in order for the trait to fix in the population. Put another way, spite can emerge more easily on the end point vertices. 

The lesson from this example is that the spread of strategy may be tied to where the strategy first emerges. In turn, the favoured vertices are those with a greater reproductive value.

\section{Discussion}

In this article, I have brought the notion of reproductive value into the study of evolution in graph-structured populations. This makes headway into unifying existing results on degree-heterogeneous graphs. Generally, for a constant-fecundity process in a graph-structured population, it may be best for the more fecund type to emerge on a vertex with high reproductive value. This depends on the type of population regulation. For birth-death updating, mutants are favoured on low-degree vertices, while mutants in the death-birth process are favoured high-degree vertices. This has been observed by other authors \cite{antalrednersood06}, but as separate cases. Reproductive value unites these into two sides of the same coin. 

The main driving force of these differences is the neutral fixation probability. Some breeding sites are more advantageous to occupy in that they naturally confer a fitness advantage on their resident. This natural advantage is captured by the reproductive value of an individual on such a site. 

The effect of heterogeneous population structures is still not well understood. It is now well-known that degree-heterogeneous graphs can affect evolution \cite{antalrednersood06, santossantospacheco08, broomrychtar08, broomrychtarstadler11}, but an explanation of how the degrees of individual vertices contribute to these effects is still needed. The concept of reproductive value fills this void. In the neutral process, those individuals that reside on vertices of a higher reproductive value have a higher-than-average probability of fixing in the population. Understanding the neutral process allows for a baseline condition against which the fitness advantage of a non-neutral allele can be measured. Where the allele emerges matters. If the allele has a fixation probability greater than the relative reproductive value of the vertex on which it emerges, it is favoured by evolution. Such a condition opens up the further study of evolution in heterogeneous graph-structured populations.    

This work also clarifies terminology existing in the literature. Take, for example, a statement from \cite{santossantospacheco08}, ``For regular graphs (in which, from the perspective of a population structure, every individual is equivalent to any other)...'' The meaning of such a statement is not entirely clear. The statement is true for \emph{vertex-transitive} graphs, as shown in \cite{taylordaywild07}, where it is understood that the graph ``looks the same'' from every vertex. Reproductive value formalizes the idea present in the above statement: on a regular graph, all individuals have the same reproductive value. It should be noted that all vertex-transitive graphs are regular, but not all regular graphs are vertex-transitive; an example is the Frucht graph \cite{frucht39}.  There are a host of factors that influence the outcome of an evolutionary process on a graph: the graph structure, including symmetry, the degree distribution, and the underlying structure; the population regulation scheme; and whether the population is experiencing constant or frequency-dependent selection. All of these factors need to be stated carefully to avoid the misinterpretation of results.

\section{Acknowledgements}
I am indebted to Christoph Hauert, Peter Taylor, Lucas Wardil, and to an anonymous reviewer for supplying valuable comments on drafts of this article. This work is supported by the National Sciences and Engineering Research Council of Canada.

\bibliographystyle{plain}
\bibliography{thesis}

\section{Appendix}

\subsection{Proof of Theorem \ref{thm:neutral} and \ref{thm:multineutral}}
I first prove Theorem \ref{thm:multineutral} and then use this result in the proof of Theorem \ref{thm:neutral}.

\begin{thm3}
The neutral fixation probability $\rho_{A|m}$ of a set $M$ of $A$ types on a graph $G$ undergoing either a birth-death or death-birth Moran process is 
\begin{eqnarray}
\rho_{A|M} = \sum_{i\in M} \rho_{A|i}.
\label{eq:multineutral2}
\end{eqnarray}
That is, the neutral fixation probability of a set of $A$ types is the sum of the individual neutral fixation probabilities. 
\end{thm3}

In preparation for this proof, define the \emph{state} of the population to be the set of $A$ types in the population. For all states $\mathcal{S}$, the fixation probability $\rho_{\mathcal{S}}$ of the set $\mathcal{S}$ satisfies the equation
\begin{eqnarray}
\rho_{\mathcal{S}}= \sum_{\mathcal{T}\neq \mathcal{S}} P_{\mathcal{S},\mathcal{T}} \ \rho_{\mathcal{T}} +\left(1-\sum_{\mathcal{T}\neq \mathcal{S}} P_{\mathcal{S},\mathcal{T}} \right)\rho_{\mathcal{S}}.
\label{eq:stateeq}
\end{eqnarray}
As an explicit instance of this equation, consider a well-mixed population of size $N$. The states are precisely the number of $A$ types. Equation (\ref{eq:stateeq}) is then
\begin{eqnarray}
\rho_i = P_{i,i+1}\rho_{i+1} +P_{i,i-1}\rho_{i-1} +(1-P_{i,i+1}-P_{i,i-1})\rho_i,
\end{eqnarray}
which is found elsewhere in the literature \cite{nowak06}.

\begin{proof}
Considering all states of the population, Equation (\ref{eq:stateeq}) is a system of equations. For the initial conditions $\rho_{0} = 0$ and $\rho_N = 1$, where $\rho_0$ is the state with no $A$ types and $\rho_N$ is the state of all $A$ types, then the system defined by Equation (\ref{eq:stateeq}) has a unique solution up to a non-zero constant. Hence, it suffices to show that Equation (\ref{eq:multineutral2}) satisfies these two initial conditions and the system defined by Equation (\ref{eq:stateeq}). 

Clearly, Equation (\ref{eq:multineutral2}) satisfies the two initial conditions. To show that it satisfies the system above, rewrite Equation (\ref{eq:stateeq}) as
\begin{eqnarray}
\sum_{\mathcal{T}\neq \mathcal{S}} P_{\mathcal{S},\mathcal{T}} \left(\rho_{\mathcal{S}} - \rho_{\mathcal{T}}\right) = 0.
\label{eq:stateeq2}
\end{eqnarray}
Note that the states $\mathcal{S}$ and $\mathcal{T}$ can differ by at most one vertex. For all other states, $\mathcal{T}'$, $P_{\mathcal{S},\mathcal{T}'}=0$. 

Denote the state obtained from state $\mathcal{S}$ by switching the type of individual that occupies vertex $j$ by $\mathcal{S}(j)$. With this, Equation (\ref{eq:stateeq2}) is
\begin{eqnarray}
\sum_{j} P_{\mathcal{S},\mathcal{S}(j)} \left(\rho_{\mathcal{S}} - \rho_{\mathcal{S}(j)}\right) = 0.
\label{eq:stateeq3}
\end{eqnarray}
I now substitute Equation (\ref{eq:multineutral2}) into the left-hand side of the above:
\begin{eqnarray}
\sum_{j} P_{M,M(j)} \left(\rho_{A|M} - \rho_{A|M(j)}\right)=0.
\label{eq:stateeq3}
\end{eqnarray}
Either $j\in M$ or $j\notin M$. In the first case,  
\begin{eqnarray}
\left(\rho_{A|M} - \rho_{A|M(j)}\right) = \rho_{A|M} - \rho_{A|M\setminus \{j\}},
\end{eqnarray}
and in the second,
\begin{eqnarray}
\left(\rho_{A|M} - \rho_{A|M(j)}\right) = \rho_{A|M} - \rho_{A|M\cup \{j\}}.
\end{eqnarray}
At this point, I require an expression for $P_{M,M(j)}$. This will depend on whether a birth-death or a death-birth process is being considered. For the birth-death process, 
\begin{eqnarray}
P_{M,M(j)} = \dfrac{1}{N}\displaystyle \sum_{k\in \mathcal{N}'(j)}u_{kj},
\label{eq:bdprob}
\end{eqnarray}
where the sum is taken over all neighbours of $j$ that are a different type than $j$. Substituting this into Equation (\ref{eq:stateeq3}) yields
\begin{eqnarray}
\dfrac{1}{N}\sum_{j} \displaystyle \sum_{k\in \mathcal{N}'(j)}u_{kj}\left(\rho_{A|M} - \rho_{A|M(j)}\right)
\end{eqnarray}
\begin{eqnarray} 
=\dfrac{1}{N}\left(\sum_{j\in M} \displaystyle \sum_{k\notin M}u_{kj}\left( \rho_{A|M} - \rho_{A|M\setminus \{j\}}\right )+ \sum_{j\in M} \displaystyle \sum_{k\notin M}u_{kj}\left (\rho_{A|M} - \rho_{A|M\cup \{j\}}\right )\right)=0.
\label{eq:intermediate}
\end{eqnarray}
At this point, I directly substitute Equation (\ref{eq:multineutral2}) into Equation (\ref{eq:intermediate}). With some simplification, Equation (\ref{eq:intermediate}) is
\begin{eqnarray} 
\dfrac{1}{N}\left(\sum_{j\in M} \displaystyle \sum_{k\notin M}u_{kj}\rho_{A|j} + \sum_{j\in M} \displaystyle \sum_{k\notin M}u_{jk}\left (- \rho_{A|j}\right )\right)=\dfrac{1}{N}\left(\sum_{j\in M} \displaystyle \sum_{k\notin M}\left(u_{kj}-u_{kj}\right ) \rho_{A|j} \right) = 0.
\end{eqnarray}
Hence, Equation (\ref{eq:multineutral2}) is a solution to Equation (\ref{eq:stateeq}) and is, therefore, the desired probability. 

\end{proof}

The argument above can be descibed as follows. Every instance of a vertex $j$ of $M$ being replaced by an individual $k$ not in $M$ exactly cancels with an instance of $j$ replacing $k$ to create the set $M\cup k$.

The argmuent for the death-birth process is analogous. The only difference is that Equation (\ref{eq:bdprob}) is
\begin{eqnarray}
P_{M,M(j)} = \dfrac{1}{N}\displaystyle \sum_{k\in \mathcal{N}'(j)}w_{kj}.
\label{eq:dbprob}
\end{eqnarray}

\begin{thm2}
Let $G$ be an evolutionary graph with $N$ vertices and suppose the edges are uniformly weighted. The fixation probability $\rho_{A|i}$ of a single $A$ type that emerges on vertex $v_i$ of $G$ in the neutral population is
\begin{eqnarray}
\rho_{A|i} = \dfrac{V_i}{\sum_{j=1}^N V_j},
\end{eqnarray}
where $V_i$ is the reproductive value of vertex $v_i$. 
\end{thm2}

\begin{proof}

I consider a death-birth process; the result for the birth-death process is derived analogously. A general proof that holds irrespective of the update rule can be derived from the results of \cite{leturquerousset02}. The proof of this theorem hinges on the Equation (\ref{eq:stateeq}). The following argument follows \cite{broomhadjichrysanthourychtarstadler10} where the authors prove a similar result for a birth-death process.

Similar to the previous proof, $\rho_{A|i}$ satisfies
\begin{eqnarray}
\rho_{A|i} =\frac{1}{N} \sum_{j\neq i} w_{ij}\rho_{A|\{ij\}} +\left( 1 - \frac{1}{N}\sum_{j\neq i} w_{ji}-  \frac{1}{N}\sum_{j\neq i} w_{ij} \right) \rho_{A|i}.
\end{eqnarray}
Rearranging yields
\begin{eqnarray}
\sum_{j\neq i} w_{ji}\rho_{A|i} = \sum_{j\neq i} w_{ij} \left( \rho_{A|\{ij\}} - \rho_{A|i} \right).
\label{eq:proofeq}
\end{eqnarray}
From the Theorem \ref{thm:multineutral}, I have $\rho_{A|\{ij\}} = \rho_{A|i} +\rho_{A|j}$. Combining this with the fact that $w_{ji}=w_{ij}=0$ for all non-adjacent $i$ and $j$, Equation  (\ref{eq:proofeq}) is
\begin{eqnarray}
\sum_{j\in \mathcal{N}(i)}w_{ji}\rho_{A|i}= \sum_{j\in \mathcal{N}(i)} w_{ij}\rho_{A|j}.
\end{eqnarray}
The solution for this is $\rho_{A|i}=d_i$. Normalizing by the sum of the degrees gives the result.

\end{proof}

\end{document}